\documentclass[11pt]{article}
\usepackage[margin=1in]{geometry}
\usepackage{amsmath,amssymb,amsthm,graphicx,booktabs,hyperref,microtype}
\usepackage[T1]{fontenc}
\hypersetup{hidelinks}
\newtheorem{proposition}{Proposition}
\newcommand{\TV}{\operatorname{rowTV}}
\newcommand{\Dm}{D_{\mathrm{marg}}}
\newcommand{\Rhat}{\widehat R}
\title{Auditing Bayesian Graph Alignment:
Diagnostic Comparisons and Reference Failure}

\author{
Melika Gorgi\\
Center for Complex Biological Systems\\
University of California, Irvine\\
Irvine, CA 92697, USA\\
\texttt{mgorgi@uci.edu}
\and
Kourosh Mirsohi\\
Department of Computer Science\\
University of California, Irvine\\
Irvine, CA 92697, USA\\
\texttt{mirsohi@gmail.com}
}

\date{}

\begin{document}

\maketitle

\begin{abstract}
Bayesian graph alignment estimates correspondence probabilities, but convergence
of an alignment-score trace need not imply accurate correspondence marginals.
We audit this gap on 240 new exact graph pairs from four source families,
240 larger pairs with 20--100 vertices, and a separate 60-case
exact implementation check. Under an explicit edge-flip likelihood, we compare three samplers
and score, marginal, indicator, categorical, and classifier-based diagnostics.
Marginal disagreement improves error discrimination over score $\widehat R$
for the exact informed sampler, but its improvement for vanilla local sampling
is uncertain. Assignment-based $R^*$ and short indicator panels are competitive;
no diagnostic dominates across samplers and endpoints. At larger sizes,
diagnostics predict subsequent marginal changes, not posterior error, and
classification performance depends on the drift threshold. Disjoint-window
and held-out-chain checks attenuate but preserve positive associations.
Only 22 of 240
original reference sets pass an agreement screen. On forty failure-selected
cases, eightfold sequential Monte Carlo (SMC) particle escalation does not
resolve disagreement, whereas additional rejuvenation helps. Longer informed runs remain unstable. An
elementary feasible-alignment bound demonstrates severely unrepresentative
SMC and informed-chain scores in concentrated 100-vertex cases, independently
of approximate reference consensus. We also exhibit common-start chains with
near-zero disagreement despite exact marginal error near 0.967. These results
support assignment-sensitive auditing while identifying limits of finite
budgets, diagnostic rankings, and reference agreement as evidence of accuracy.
\end{abstract}

\section{Introduction}
Graph alignment seeks a bijection between vertices of two observed networks.
In Bayesian alignment, the output of interest may be the probability that a
particular source vertex corresponds to a particular target vertex. This
differs from finding a single good permutation: several high-scoring solutions
may distribute their mass very differently across individual assignments.
Reliable uncertainty therefore requires auditing the quantities that will be
used downstream. This distinction is consequential rather than cosmetic: a
scalar score can appear stable even when the posterior correspondence matrix---
the object used to rank alternative matches, transfer information, or propagate
alignment uncertainty---is still badly wrong. An algorithm can therefore look
converged under a familiar one-dimensional diagnostic while its reported
uncertainty remains unreliable.

Correspondence uncertainty matters when alignments support scientific decisions.
Protein-network alignment transfers evidence about functional relationships
across species~\cite{singh,berg2006,kolar2012}, while connectome matching
identifies comparable neural structures~\cite{faq}. In either setting, a high objective value alone
does not determine whether a particular node correspondence is reliable.
A posterior can express ambiguity across competing matches, but only if its
numerical approximation explores those alternatives. This motivates monitoring
assignment probabilities alongside scalar scores.

Exact enumeration provides a direct check of these probabilities on small
graphs, but its factorial cost prevents routine use at larger sizes. A useful
audit must therefore separate exact small-instance accuracy, observable
large-instance instability, and the reliability of approximate references.
Two features make this more than a sampler comparison. Where enumeration is
possible, diagnostics can be evaluated against exact posterior error rather
than an approximate proxy; where it is not, the approximate references can
be audited rather than silently treated as ground truth, and reference-free
falsification checks can still expose severe failure. This matters because a
long run or agreement between two methods can otherwise become an untested
surrogate for posterior accuracy.

We make three empirical contributions, with distinct evaluation endpoints. First, we calibrate assignment-sensitive summaries on 240 new
exact cases drawn from homogeneous, block-structured, and degree-heterogeneous
source families. Second, we assess whether early diagnostics predict later
marginal changes at 20, 50, and 100 vertices. Third, we include an informed local
kernel with a correct Hastings adjustment and explicit accounting for its
additional local evaluations. Independent long-run references are themselves
audited, and common-start experiments expose the failure of cross-chain
agreement as a certificate.
Follow-up particle/work and parallel-tempering ladder ablations on forty selected failures,
unmatched informed runs, categorical and classifier comparisons on all main
graphs, and a complete threshold sweep address the main design limitations.
An additional post hoc feasible-alignment bound checks whether observed scores
are plausible under the target without declaring an approximate reference
correct. Together, these experiments target a practical gap between
optimization-oriented graph matching and uncertainty-aware Bayesian use: they
test whether the posterior summaries actually reported to downstream analyses,
rather than only the sampled scores, are numerically credible.

The resulting benchmark characterizes when assignment-sensitive diagnostics
help and when even agreement between computational methods is misleading.
It does not introduce a new graph-matching objective or assume that a single
diagnostic can certify convergence. Code and saved experimental outputs are
provided in the companion repository.\footnote{\url{https://github.com/Mirsohi/Graph-Alignment}}
The experiments below audit a specified posterior rather than prediction
accuracy alone.

\section{Related work}
\paragraph{Graph matching and statistical recovery.}
Graph matching is commonly formulated as a quadratic assignment problem (QAP).
FAQ optimizes an indefinite continuous relaxation of this objective~\cite{faq},
while seeded matching incorporates known correspondences~\cite{fishkind}.
The behavior of convex and indefinite relaxations can differ sharply even when
the underlying discrete optimum is identifiable~\cite{lyzinski2016}.
Probabilistic soft-matching formulations have also been developed for graphs and
hypergraphs~\cite{zass2008}. Fused Gromov--Wasserstein methods instead compare
graph structure and attributes through transport couplings~\cite{fgw}; such
couplings are useful soft correspondences, but are not automatically posterior
probabilities under the likelihood studied here.

A complementary line of work studies statistical recovery under random graph
models. Recovery thresholds have been established for correlated random graphs
and related regimes~\cite{cullina,wu2022,dingdu2023}, and recent work treats
multi-graph alignment explicitly as a Bayesian estimation problem~\cite{vassaux2026}.
These results concern information-theoretic or algorithmic recoverability of a
latent alignment. Our question is different: whether a finite computation
faithfully represents the uncertainty of a specified posterior.

\paragraph{Bayesian and probabilistic graph alignment.}
Bayesian graph matching has a longer history than the recent statistical
recovery literature. Wilson and Hancock developed Bayesian compatibility and
relaxation frameworks for inexact relational graph matching~\cite{wilson1996,wilson1997},
and Williams, Wilson, and Hancock extended this perspective to multiple graph
matching~\cite{williams1997}. In biological network comparison, Berg and
L\"assig developed an evolutionarily grounded Bayesian alignment model~\cite{berg2006},
which was later implemented and extended in the \textsc{GraphAlignment}
framework~\cite{kolar2012}.

More recently, L\'azaro, Guimer\`a, and Sales-Pardo proposed a probabilistic
multiple-network alignment model that samples an ensemble of alignments and
uses posterior node-mapping probabilities rather than only a single optimum~\cite{lazaro2025}.
Gaffi, Josephs, and Lin~\cite{gaffi} develop exchangeable permutation priors,
a correlated stochastic block model, blocked Gibbs inference, and
permutation-domain posterior summaries. These models differ from one another
and from our fixed edge-flip target. Our contribution is therefore not a new
Bayesian graph-matching model, but an audit of finite-sample computational
reliability for a deliberately simple posterior whose small instances can be
checked exactly.

\paragraph{Convergence diagnostics and discrete sampling.}
For Markov chain Monte Carlo (MCMC), modern rank-normalized and folded
$\Rhat$ improves diagnosis of heavy tails and between-chain scale
differences~\cite{vehtari}. It remains a diagnostic of the monitored variable:
applying it only to a many-to-one score projection cannot restore discarded
assignment information. Multivariate convergence monitoring predates this
work~\cite{brooks}. Diagnostics specifically designed for categorical variables
compare frequency distributions with dependence adjustments~\cite{deonovic},
while classifier-based $R^*$ attempts to distinguish chains using their joint
sampled states~\cite{lambert}. Simulation-based calibration checks Bayesian
algorithms across draws from a generative model~\cite{sbc}; our exact oracle
instead measures conditional marginal error for each small graph pair. The
large-tier drift endpoint supplies neither type of posterior calibration.

Informed discrete proposals exploit neighboring target probabilities; locally
balanced functions provide a principled construction~\cite{zanella}. We use a
random-pivot restriction of a Barker-balanced proposal with an exact Metropolis
correction. The kernel is an additional audit baseline, not a claim of a new
state-of-the-art graph matcher. Sequential Monte Carlo (SMC) is also used in
graph matching~\cite{jun}; our population-annealing reference follows the same
overlap target as the MCMC methods. Gradient-informed discrete MCMC likewise
adapts proposals to local target structure~\cite{grathwohl}; we do not benchmark
that broader class. Our reference designs draw on annealed SMC samplers~\cite{smc}
and parallel tempering~\cite{earl}, whose use of multiple particles or
temperatures does not by itself establish adequate exploration.

\section{Targets and graph families}
For undirected binary adjacency matrices $A,B$ and $\pi\in S_n$, define
\[
 S(\pi)=\sum_{i<j}A_{ij}B_{\pi(i)\pi(j)},\qquad
 \mu_\beta(\pi\mid A,B)=Z_\beta^{-1}\exp\{\beta S(\pi)\}.
\]
The correspondence marginal is $P_{ia}=\Pr_{\mu_\beta}(\pi(i)=a)$, and
$\TV(P,Q)=\|P-Q\|_1/(2n)$, with the norm summing all matrix entries.
The planted permutation evaluates correspondence to the generated identity;
it is not substituted for posterior truth.

\paragraph{A likelihood valid for structured sources.}
Condition on a source graph $A$, draw the permutation uniformly, and flip each
edge or nonedge independently with probability $q<1/2$ in the permuted copy.
The mismatch count satisfies
\[
 M(\pi)=|E(A)|+|E(B)|-2S(\pi).
\]
Consequently $\log p(B\mid A,\pi)=C+\beta S(\pi)$ with
$\beta=2\log((1-q)/q)$. This conditional likelihood is valid for every
distribution of $A$. It avoids applying a homogeneous resampling likelihood
to a different structured observation process.

The new exact tier uses an Erd\H{o}s--R\'enyi (ER) model with edge probability
$0.3$, sparse ER with $\min(0.3,1.5/n)$, a two-block stochastic block model
(SBM) with block proportions approximately $0.4/0.6$ and within/between
probabilities $0.55/0.08$, and a degree-heterogeneous SBM.
In the latter, independent vertex weights are uniform on $[0.4,1.6]$, multiply
the block probabilities, and probabilities are capped at $0.95$. Block labels
and degree weights are not given to the samplers. Unequal block sizes reduce
one source of trivial block-exchange symmetry, but small realized graphs can
still have automorphisms. This tier is not restricted to rigid graphs.

A separate implementation check uses 60 ER pairs at $n\in\{8,9,10\}$,
twenty seeds per size, with edge probability $p=0.3$. Each source edge or
nonedge is independently redrawn from $\operatorname{Bernoulli}(p)$ with
probability $\epsilon=0.1$ before relabeling. Its overlap posterior has
\[
 \beta=\log\frac{(1-\epsilon+\epsilon p)(1-\epsilon p)}
 {\epsilon^2p(1-p)}.
\]
This resampling channel differs from independent edge flips; its results are
reported separately. Four local Metropolis--Hastings (MH) chains per pair use $2^{18}$ proposals,
discard $2^{16}$, and retain every fifth subsequent state. Saved exact-oracle
errors are reused for this implementation check; the main 240-case tier
independently recomputes its exact posterior.

\section{Assignment diagnostics and their interpretation}
For $K$ independent chains with empirical marginal matrices $\widehat P^{(a)}$,
define
\[
 \Dm=\frac{2}{K(K-1)}\sum_{a<b}\TV(\widehat P^{(a)},\widehat P^{(b)}).
\]
Constructing the matrices costs $O(KTn)$ for $T$ retained permutations per
chain; their dense storage costs $O(Kn^2)$ and direct pairwise comparison
costs $O(K^2n^2)$. No exact posterior or planted correspondence is required.

We compare $\Dm$ with rank-normalized/folded score $\Rhat$, the reciprocal of the
score bulk effective sample size (ESS), $1/\mathrm{ESS}_{\mathrm{bulk}}$,
maximum indicator $\Rhat$ in a fixed panel of at most sixteen assignments,
normalized Frobenius disagreement, and mean row Jensen--Shannon divergence.
The indicator panel selects $(i,(7i+3)\bmod n)$ for
$i<\min(n,16)$ before observing samples. Its limited coverage is reported;
it is not an exhaustive assignment-wise diagnostic. Constant traces yield an
undefined diagnostic, not automatic convergence. Analyses record missingness
and additionally rank undefined score diagnostics as warnings in a sensitivity
analysis.

\begin{proposition}[What disagreement lower bounds]
For any marginal matrix $P$,
$\Dm/2\le K^{-1}\sum_a\TV(\widehat P^{(a)},P)$.
This need not lower bound $\TV(K^{-1}\sum_a\widehat P^{(a)},P)$.
\end{proposition}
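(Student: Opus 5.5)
The first claim is a triangle-inequality argument; the second needs a counterexample. Since $\TV(P,Q)=\|P-Q\|_1/(2n)$ is a scaled $\ell_1$ norm, it is a metric. Write $e_a=\TV(\widehat P^{(a)},P)$. For each pair $a<b$ the triangle inequality gives
\[
\TV(\widehat P^{(a)},\widehat P^{(b)})\le e_a+e_b .
\]
Summing over the $K(K-1)/2$ pairs, each index $a$ appears in exactly $K-1$ pairs. Hence
\[
\sum_{a<b}\TV(\widehat P^{(a)},\widehat P^{(b)})\le (K-1)\sum_a e_a .
\]
Multiplying by $2/(K(K-1))$ yields $\Dm\le (2/K)\sum_a e_a$, which is the stated bound $\Dm/2\le K^{-1}\sum_a e_a$. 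Nothing here uses the structure of $P$, so the bound holds for any marginal matrix, including the exact posterior marginal. This step is routine.

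For the second claim I will exhibit a case where $\Dm$ is large but the pooled estimate is exact, so $\TV(K^{-1}\sum_a\widehat P^{(a)},P)=0<\Dm/2$.

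\begin{itemize}
\item Take $n=2$ and $K=2$.
\item Let $\widehat P^{(1)}=I$, the identity permutation matrix, and $\widehat P^{(2)}=J$, the swap matrix. Each chain is stuck on one permutation.
\item Let $P=(I+J)/2$, the uniform matrix with all entries $1/2$. This is a valid posterior marginal, for instance when $\beta=0$ or when the graphs have a symmetry exchanging the two vertices.
\end{itemize}

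Then $\TV(I,J)=4/(2\cdot2)=1$, so $\Dm=1$ and $\Dm/2=1/2$. The pooled matrix $(I+J)/2$ equals $P$ exactly, so its distance to $P$ is $0$. The lower bound on the pooled error therefore fails.

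The only care needed is that the example be realizable: the chosen $P$ must be an actual marginal of some $\mu_\beta$, and the $\widehat P^{(a)}$ must be empirical marginals of chains. Degenerate chains that each retain a single permutation give exactly $I$ and $J$, and any two-vertex pair with $S$ constant across both permutations makes $P$ uniform. This shows the failure can occur for genuine sampler output, not merely for abstract matrices. A closing remark would note the interpretation: disagreement lower bounds the average per-chain error, not the error of the pooled estimate that is actually reported.
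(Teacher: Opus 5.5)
Your proposal is correct and follows the paper's proof: the paper likewise applies the triangle inequality to each pair and counts each chain's error $K-1$ times. For the second claim, the paper only remarks that opposite chain biases can cancel in the pooled matrix; your $n=2$ identity/swap example is a valid concrete instance of that cancellation, and for $n=2$ the score is automatically constant, so $P$ is always uniform.
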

\begin{proof}
Apply the triangle inequality to each pair. Every individual chain error
appears in $K-1$ pairs. Opposite chain biases can cancel in the pooled matrix,
so the analogous pooled-error conclusion does not follow.
\end{proof}

\begin{proposition}[Independent-draw calibration]
If each chain has $T$ independent and identically distributed (IID) posterior draws, then
\[
 \mathbb E\Dm\le\min\{1,\sqrt{(n-1)/(2T)}\},\qquad
 \Pr(\Dm-\mathbb E\Dm\ge t)\le e^{-KTt^2/2}.
\]
\end{proposition}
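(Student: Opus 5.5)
The plan is to prove the two claims separately. The expectation bound will come from a row-wise variance calculation followed by Cauchy--Schwarz. The tail bound will come from McDiarmid's bounded-differences inequality applied to the $KT$ independent draws.

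\emph{Expectation.} Since $\TV\le 1$ always, $\Dm\le 1$, so it suffices to show $\mathbb E\,\TV(\widehat P^{(a)},\widehat P^{(b)})\le\sqrt{(n-1)/(2T)}$ for each pair $a<b$; averaging over pairs then preserves the bound.
\begin{itemize}
\item Fix a row $i$ and a target vertex $c$. The entries $\widehat P^{(a)}_{ic}$ and $\widehat P^{(b)}_{ic}$ are independent means of $T$ IID Bernoulli$(p_c)$ variables, where $p_c=P_{ic}$. Their difference has mean zero and variance $2p_c(1-p_c)/T$.
\item Jensen's inequality, $\mathbb E|X|\le\sqrt{\mathbb E X^2}$, gives $\mathbb E|\widehat P^{(a)}_{ic}-\widehat P^{(b)}_{ic}|\le\sqrt{2p_c(1-p_c)/T}$.
\item Summing over $c$ and applying Cauchy--Schwarz over the $n$ columns gives
\[
\mathbb E\|\widehat P^{(a)}_{i\cdot}-\widehat P^{(b)}_{i\cdot}\|_1\le\sqrt{2n/T}\,\Bigl(1-\sum_c p_c^2\Bigr)^{1/2}\le\sqrt{2(n-1)/T},
\]
using $\sum_c p_c^2\ge 1/n$.
\item Since $\TV$ equals the average over the $n$ rows of half the row $\ell_1$ distance, $\mathbb E\,\TV\le\tfrac12\sqrt{2(n-1)/T}=\sqrt{(n-1)/(2T)}$, as required.
\end{itemize}

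\emph{Concentration.} View $\Dm$ as a function of the $KT$ independent permutation draws and bound how much it moves when a single draw is changed.
\begin{itemize}
\item Suppose one draw in chain $a$ is replaced. In each row, at most two entries of $\widehat P^{(a)}$ change, each by $1/T$. Hence $\|\Delta\widehat P^{(a)}\|_1\le 2n/T$.
\item Each pair involving $a$ therefore changes its $\TV$ by at most $(2n/T)/(2n)=1/T$, and there are $K-1$ such pairs.
\item With the weight $2/(K(K-1))$, the bounded difference of $\Dm$ is at most $c=2/(KT)$.
\item McDiarmid's inequality then gives
\[
\Pr(\Dm-\mathbb E\Dm\ge t)\le\exp\Bigl(-\frac{2t^2}{KT\,c^2}\Bigr)=e^{-KTt^2/2}.
\]
\end{itemize}

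The main obstacle is obtaining the constant $\sqrt{(n-1)/(2T)}$ without a stray factor. A cruder route, bounding each chain's distance to $P$ and then applying the triangle inequality, loses a factor of $\sqrt2$. The direct variance of the difference together with the row constraint $\sum_c p_c^2\ge 1/n$ is what produces the $n-1$ rather than $n$. The bounded-difference step is routine once the per-draw $\ell_1$ change of $2n/T$ is recorded.
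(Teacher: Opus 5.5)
Your proposal is correct and follows essentially the same route as the paper: it uses the variance $2P_{ic}(1-P_{ic})/T$ of an entrywise difference, then Jensen and rowwise Cauchy--Schwarz (you also spell out the step $\sum_c P_{ic}^2\ge 1/n$ that produces $n-1$), and finally a per-draw row-TV change of at most $1/T$, which gives the bounded difference $2/(KT)$ for McDiarmid. All constants check out, including $\exp\{-2t^2/(KT\cdot 4/(KT)^2)\}=e^{-KTt^2/2}$.
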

\begin{proof}
An entrywise difference has variance $2P_{ia}(1-P_{ia})/T$.
Jensen followed by rowwise Cauchy--Schwarz gives the expectation bound.
Replacing one sampled permutation changes its empirical matrix by row-TV
at most $1/T$, hence changes $\Dm$ by at most $2/(KT)$.
McDiarmid's bounded-differences inequality gives the tail bound.
\end{proof}
These are IID statements. Replacing $T$ by the score ESS is not justified.
For independent stationary, dependent chains, the expectation is instead
bounded by the pairwise average of
$(2n)^{-1}\sum_{ia}\sqrt{V_{a,ia}+V_{b,ia}}$, where each $V$ is the
variance of its chain's marginal estimate. The relevant autocorrelations are
assignment-specific.

\paragraph{Stationary variance identity.}
For a stationary indicator trace of length $T$ with variance
$v=P_{ia}(1-P_{ia})$ and autocorrelations $\rho(h)$,
\[
 \operatorname{Var}(\widehat P_{ia})=\frac vT
 \left[1+2\sum_{h=1}^{T-1}(1-h/T)\rho(h)\right].
\]
This is the finite-length covariance-sum identity. For independent chains,
variances of entrywise differences add; Jensen's inequality then gives the
dependent-draw expectation bound above. We do not claim a dependent-draw
concentration bound without additional assumptions.

\paragraph{Finite-sample error is not automatically bias.}
For $N$ IID posterior permutations the exact expected marginal row-TV is
\[
 \frac1n\sum_{ia}P_{ia}(1-P_{ia})
 \Pr\{\operatorname{Bin}(N-1,P_{ia})=\lfloor NP_{ia}\rfloor\}.
\]
For completeness, let $X\sim\operatorname{Bin}(N,p)$ and
$k=\lfloor Np\rfloor$. Since $\mathbb E(X-Np)=0$,
$\mathbb E|X-Np|=2\mathbb E[(X-Np)_+]$. Using
$j\Pr(X=j)=Np\Pr\{\operatorname{Bin}(N-1,p)=j-1\}$ in the positive tail
and the binomial recursion yields
\[
 \mathbb E|X/N-p|=2p(1-p)
 \Pr\{\operatorname{Bin}(N-1,p)=k\}.
\]
Each marginal count from IID permutations has this binomial law. Summing the
entrywise expectations and dividing by $2n$ proves the displayed floor; no
independence across entries is required for this expectation. The implementation
verifies the identity against direct binomial summation at several sample sizes
and probabilities. We use this deterministic mean floor for all new exact cases
and distinguish it from a quantile or uncertainty interval.

\section{Samplers and experimental protocol}
Random-transposition MH proposes uniformly and accepts with
$\min(1,e^{\beta\Delta S})$. Parallel tempering (PT) uses eight replicas on the quadratic ladder
$\beta_r=\beta(r/7)^2$ with adjacent exchanges. All main-study initial states
are independent uniform permutations within each set of four runs.

\paragraph{Informed local baseline.}
Choose pivot $u$ uniformly and weight its $n-1$ swaps by
$g(e^{\beta\Delta S})$, where $g(t)=t/(1+t)$. Conditional on the pivot,
the proposal is normalized by $Z_u(\pi)$. Since $g(t)=t g(1/t)$, the
Hastings acceptance is $\min\{1,Z_u(\pi)/Z_u(\pi')\}$. Each conditional
kernel is reversible, so their uniform mixture is reversible. The implementation
uses $2(n-1)+1$ score-delta evaluations per iteration, including both neighbor
sets and the selected delta. We count all of these evaluations. The stronger
proposal thus receives fewer iterations under matched work and need not yield
smaller error.

\paragraph{Exact tier.}
At $n=8$ we use four source families, three flip probabilities
$q\in\{0.10,0.30,0.45\}$, and twenty graph seeds per condition: 240 graphs.
Each candidate method receives four chains, each with at most $2^{18}$ delta
evaluations. Exact marginals, entropy, maximum a posteriori (MAP) mass, and MAP multiplicity are
computed once per graph. The exact oracle enumerates the same overlap target
as all candidate methods.

\paragraph{Large tier.}
We use $n\in\{20,50,100\}$, ER and SBM, $q\in\{0.10,0.30\}$, and twenty
seeds per condition: another 240 graphs. Early estimates use $2^{16}$ delta
evaluations per chain. Later estimates extend the identical seeded trajectory
to sixteen times the early work. Both estimates discard the first quarter
of their respective runs, so the later retained window starts after the early
window ends. Approximately 1024 permutations are retained per chain; thinning
bounds storage rather than being claimed to improve efficiency.

Independent references use separate seeds for four long PT runs and four
population-annealing SMC runs. SMC has 512 particles and 64 annealing stages;
rejuvenation work is matched to the long-run delta budget. We record between-method and within-method marginal disagreement. An operational reference gate
requires all three discrepancies to be at most $0.05$. This gate tests empirical
agreement and resolution, not convergence: finite particle count alone can
prevent it from passing on diffuse large targets. Neither passing nor failing
isolates all causes of reference error.

\paragraph{Endpoints and statistics.}
The primary exact endpoint is pooled marginal error; the large endpoint is
distance between early and later pooled marginals. Later drift measures
instability, not distance to the posterior. We report Spearman association, the area under the receiver operating
characteristic curve (ROC AUC) at an explicit endpoint threshold $0.1$, average precision, and
prevalence. Bootstrap intervals use 1000 resamples of graph instances within
each family/size/noise cell. Sampler comparisons are paired by graph, and
diagnostic AUC differences use the same valid cases. Per-condition summaries
expose pooling effects; no population-wide causal interpretation is assigned
to a concentration association. AUC is undefined when there are no positive
or no negative cases.
When every case is positive, average precision equals its prevalence baseline
of one and is not evidence of discrimination. In the large tier there are only
five PT cases below the drift threshold; its pooled AUC contrast is therefore
based on few negative cases and must be read alongside condition-specific
rank correlations.

\subsection{Follow-up design: reference budgets, diagnostics, and thresholds}
We fix a failure-selected subsample before follow-up sampling: five graphs
drawn uniformly without replacement from each size ($50,100$), source family
(ER, SBM), and noise ($0.1,0.3$) cell, for forty graphs. Selection conditions
on failure of the original reference screen. Thus follow-up pass rates describe
these forty cases and do not estimate the original 240-case population rate.

\paragraph{Reference work and resolution.}
For each selected graph we run four independent SMC replicates at
$512,1024,2048,4096$ particles, retaining 64 annealing stages and 32 local
rejuvenation moves per particle per stage. The 512-particle baseline reproduces
the saved original marginals. Particle escalation also escalates work. Two
controls distinguish that tradeoff: 4096 particles with four moves use baseline
work, and 512 particles with 256 moves use the same work as the largest
32-move run. We save weight ESS values, surviving initial ancestors, distinct final
particles, scores, and marginal discrepancies. Ancestry loss diagnoses particle
genealogy, not by itself bias after rejuvenation.

We also rerun PT with 8, 16, and 32 quadratic-ladder temperature replicas.
Each independent PT run makes $2^{22}$ local proposals \emph{in total across
all temperature replicas}; it does not give $2^{22}$ proposals to each
temperature replica. One sweep proposes one move at each of the $R$ replicas,
so the respective sweep counts are $524288$, $262144$, and $131072$.
There are four independently seeded PT runs per graph and ladder setting,
for $2^{24}$ local proposals across those four runs. Relative to the original
8-replica reference, this quadruples local-proposal work; across new ladders it
holds that work fixed. More replicas therefore mean fewer sweeps per replica.
Per-edge swap rates and cold--hot--cold walker trips distinguish local exchange
acceptance from traversal of the entire ladder. None of these interventions
alone identifies posterior multimodality or proves that either reference is
correct. Our screen still requires both within-method and cross-method row-TV
discrepancies at most $0.05$.

\paragraph{Unmatched informed runs.}
On the same forty graphs, four informed chains receive 10,000, 40,000, and
160,000 transitions, with the same initializations and nested random streams
across budgets. Each run discards its first quarter and retains up to 4096
draws per chain. We report discrepancy from the preceding budget, disagreement
across chains, and drift between halves of the retained window. Runs stop at
the prespecified maximum; apparent within-basin stability is not a stopping
certificate. These budgets address starvation without asserting unlimited
computation or an optimally tuned informed sampler.

\paragraph{Categorical and classifier comparators.}
For all 480 original graphs and all three methods, we regenerate early samples
with the original seeds and verify their marginal matrices against the saved
ones before computing new diagnostics. We implement the Wei\ss/DAR(1)
dependence adjustment described by Deonovic and Smith~\cite{deonovic} for each
node's categorical assignment. If $p_{cj}$ is chain $c$'s frequency of category
$j$, $\bar p_j$ its pooled frequency, $K$ the chain count, and $T$ the number
of draws per chain, then
\[
 X^2=T\sum_{c,j:\bar p_j>0}\frac{(p_{cj}-\bar p_j)^2}{\bar p_j},\qquad
 X^2_{\mathrm{adj}}=X^2\frac{1-\hat\phi}{1+\hat\phi}.
\]
Here $\hat\phi=1+(KT)^{-1}-(1-\hat s)/(1-\sum_j\bar p_j^2)$ and $\hat s$
is the pooled probability of identical consecutive assignments. Negative
estimates are truncated to zero, consistent with the DAR(1) parameter domain.
Zero degrees of freedom or $\hat\phi\ge1$ are flagged as undefined. We rank
graphs by the largest valid adjusted statistic divided by its degrees of
freedom $(J-1)(K-1)$, and retain Bonferroni-adjusted nodewise $p$-values,
undefined-coordinate fractions, and sparse expected-count fractions. Actual
permutation chains need not follow DAR(1); these are model-based diagnostics,
not calibrated significance guarantees for this application.

For $R^*$~\cite{lambert}, each chain is split into two halves, producing eight
class labels. A stratified random 70/30 split trains and evaluates a random
forest with 100 trees, minimum leaf size five, square-root feature subsampling,
and a fixed graph-specific seed. We report eight times test accuracy using
three inputs: the score, assignments of the first sixteen nodes (all nodes
when $n<16$), and the full assignment. Assignment categories are one-hot
encoded, so numerical vertex labels do not impose an artificial order. This
node panel uses categorical assignments and is richer than the original
sixteen binary assignment indicators; it is not a feature-count-matched test.
This implements a fixed random-forest (RF) version of the classification diagnostic, not an
exhaustive classifier search or its uncertainty ensemble. Random train/test
splits of autocorrelated draws are not independent replication; $R^*$ is used
as a discrimination summary, not a significance test.

\paragraph{Threshold robustness.}
We retain all cutoffs in $\{0.05,0.1,0.2,0.3,0.4,0.5,0.6,0.7,0.8,0.9\}$, reporting
positive/negative counts, prevalence, AUC, and average precision for the three
original diagnostic panels, pooled and separately by size. AUC intervals and
new paired diagnostic contrasts use 500 graph bootstrap resamples within
size/family/noise cells; SMC paired contrasts use 2000. The main-study intervals
retain their original 1000 resamples. Spearman association with continuous
drift is independent of the cutoff. We do not choose a cutoff after observing
performance or substitute an unknown large-graph posterior's IID floor.

\paragraph{Multiplicity and analysis status.}
The follow-up program includes several samplers, budgets, diagnostics,
thresholds, and overlapping graph subsets. Its intervals are pointwise,
exploratory summaries; they do not control familywise error across this full
program. The fixed follow-up and revision protocols were recorded before
their respective new runs, but followed inspection of earlier results and
are not independent confirmatory studies. Reusing graphs enables paired
contrasts and also makes these comparisons dependent. We report the complete
specified grids rather than interpreting every interval that excludes zero
as a separate confirmed discovery.

\subsection{Disjoint samples and broader moves}
\paragraph{Nonoverlapping early windows.}
For all 240 larger graphs and three samplers, regenerate and verify the
original retained early trajectories. From each chain with $T$ retained draws,
let $h=\lfloor T/2\rfloor$. Window A consists of the first $h$ draws and
window B of the last $h$; any middle draw is dropped. Compute diagnostics from
A and define the outcome as $\TV(\bar P_B,\bar P_{\mathrm{later}})$, using the
original later window. This removes literal reuse of draws between diagnostic
and outcome. Adjacent windows and the subsequent trajectory can remain
dependent; this is not an independent-chain validation.
For a matched-draw-count control, compare the same A diagnostic against
$\TV(\bar P_A,\bar P_{\mathrm{later}})$. Also reverse A and B for marginal
disagreement. All seven diagnostic panels are recomputed from A, with their
original hyperparameters. Pointwise correlation intervals and paired changes
use 1000 graph bootstrap resamples within size/family/noise cells. Separate
size-only, family/noise-only, and full-condition decompositions of the original
categorical result assess aggregation effects without assigning a unique cause.
An analysis-only secondary check, added before inspecting the split-window
outcomes, uses disagreement of original chains 0/1 against drift of chains 2/3,
and reverses those groups. The corresponding shared control uses the same
two chains for both quantities. This separates random streams as well as
draws, at the cost of only two chains per estimate; all comparisons still
condition on the same graph and experimental design.

\paragraph{A broader move-class probe.}
Select two of the previous $n=100$ failure cases from each family/noise cell
before new sampling (eight graphs). At each transition, independently of the
state, generate a candidate multiset of eight random transpositions and eight
products of two disjoint transpositions. Conditional on this multiset $\mathcal M$,
every candidate $m$ is an involution. Choose $m$ with probability proportional to
$g\{\mu(m x)/\mu(x)\}$, where $g(t)=t/(1+t)$, and accept with
$\min\{1,Z_{\mathcal M}(x)/Z_{\mathcal M}(m x)\}$, evaluating both normalizers
on the \emph{same} candidate multiset. For fixed $\mathcal M$, the Barker ratio
identity yields detailed balance, including duplicate candidates. A
state-independent mixture of these kernels remains reversible; single swaps
allow both permutation parities. An exhaustive five-vertex transition-matrix
check verifies the implemented conditional detailed balance and composed deltas.

This changes the move class and candidate construction, so it is an exploratory
alternative rather than a one-factor isolation of pivot choice. Four chains
reuse the original initializations and receive either 160,000 transitions or
663,333 transitions. Each transition evaluates 48 transposition deltas: eight
single and eight double moves in each of two normalizers. The latter budget uses 31,839,984 deltas per chain, compared with 31,840,000 for the original
single-pivot kernel at 160,000 transitions; the former matches transition
counts. Quarter burn-in and at most 4096 retained draws match the earlier
protocol. Neither comparison equates all computational overhead or optimizes
the broader family of informed proposals.

\paragraph{Implementation and reproducibility checks.}
An exhaustive five-vertex check compares all score deltas against recomputed
scores and compares the declared temperature target with the direct flip
likelihood. A four-vertex transition matrix for the informed kernel verifies
detailed balance and stationarity to numerical precision, while the broader
move-class implementation is checked exhaustively at five vertices for
conditional detailed balance and composed deltas. Sampled score traces are
checked against direct scores for each kernel. Diagnostic checks include
independent traces, shifted chains, and constant traces, and structural and
numerical consistency checks of the saved implementation-check outputs also
pass. The companion repository contains per-condition tables, diagnostic
missingness checks, paired bootstrap summaries, exact IID floors, and
reference-gate outcomes.

\section{Results}
\paragraph{Exact implementation check.} All sixty resampling-channel local MH cases were rerun. Marginal disagreement matches the saved values exactly. Corrected rank normalization with one constant trace treated as undefined (59 valid score cases) gives score diagnostic correlation 0.483 and AUC 0.749, versus 0.926 and 0.969 for $\Dm$. The corrected and previous rank-normalization conventions give the same substantive ranking. Exact error labels in this replication come from the previously validated oracle outputs; the new 240-case tier recomputes its own oracle.

\paragraph{New exact calibration.} Table~\ref{tab:exact} reports the diagnostic comparisons for 240 graphs per method. High-error cases are uncommon for vanilla local MH and absent for PT at this budget. Consequently a universal AUC improvement is not supported. For informed MH the paired AUC difference between marginal disagreement and score $\Rhat$ is 0.129 (95\% interval [0.034, 0.235]). The fixed assignment-indicator baseline is competitive; the result supports assignment sensitivity rather than a unique advantage of marginal total variation.

\begin{table}[tb]\centering\small\begin{tabular}{llrrr}\toprule Method & Diagnostic & Spearman & AUC & High error \\\midrule

Local MH & $D_{\rm marg}$ & 0.527 & 0.976 & 4/240 \\

Local MH & Score $\widehat R$ & 0.137 & 0.944 & 4/240 \\

Local MH & Indicator $\widehat R$ & 0.409 & 0.972 & 4/240 \\

PT & $D_{\rm marg}$ & 0.271 & not estimable & 0/240 \\

PT & Score $\widehat R$ & 0.013 & not estimable & 0/240 \\

PT & Indicator $\widehat R$ & 0.031 & not estimable & 0/240 \\

Informed MH & $D_{\rm marg}$ & 0.716 & 0.975 & 16/240 \\

Informed MH & Score $\widehat R$ & 0.439 & 0.846 & 16/240 \\

Informed MH & Indicator $\widehat R$ & 0.623 & 0.971 & 16/240 \\

\bottomrule\end{tabular}\caption{New exact tier; high error means pooled row-TV above 0.1. AUC is undefined without both label classes.}\label{tab:exact}\end{table}

\paragraph{Posterior concentration.} Across the 240 exact instances, posterior entropy and the local-minus-PT marginal-error gap have Spearman correlation $\rho=-0.386$. In a descriptive regression controlling for source family, flip-probability category, and realized source density, the entropy coefficient is $-0.00624$ with stratified bootstrap interval $[-0.00986,-0.00296]$. Size is fixed at eight. These associations are consistent with a regime effect but do not establish a causal role for entropy or a universal sampler ranking.

\paragraph{Larger graphs.} All 240 planned larger graph pairs completed. Table~\ref{tab:large} compares early diagnostics against subsequent marginal drift. Differences in pooled rank correlation can reflect size and noise as well as within-condition variation; the companion repository reports every condition separately. When every run exceeds the drift threshold, AUC cannot distinguish diagnostics in that method group.

\begin{table}[tb]\centering\small\begin{tabular}{lrrrrr}\toprule Method & $\rho(D_{\rm marg})$ & $\rho(\widehat R)$ & AUC$(D_{\rm marg})$ & AUC$(\widehat R)$ & Drift $>0.1$ \\\midrule

Local MH & 0.433 & 0.184 & 0.998 & 0.995 & 200/240 \\

PT & 0.809 & 0.437 & 1.000 & 0.814 & 235/240 \\

Informed MH & 0.665 & 0.210 & not estimable & not estimable & 240/240 \\

\bottomrule\end{tabular}\caption{Large-tier endpoint is drift after sixteen times the early work, not exact posterior error.}\label{tab:large}\end{table}

\paragraph{Reference resolution.} 22 of 240 graph pairs pass the operational PT/SMC agreement gate. Mean cross-method reference distances are $n=20$: 0.056, $n=50$: 0.691, and $n=100$: 0.912. These runs do not supply a generally validated posterior truth set. In particular, the SMC reference has only 512 particles per run; Monte Carlo resolution and genealogy can contribute to disagreement even when a target is diffuse. We therefore base the principal large-scale conclusion on measured drift and report reference distance only as a screened secondary comparison. The data do not certify accurate large-graph marginals.
\paragraph{Matched replication and condition-level checks.}
On the same 59 evaluable original cases, marginal disagreement has Spearman
correlation 0.928 and AUC 0.968. The difference from the previous 60-case score
summary reflects both the small plotting-position correction and the explicit
handling of the degenerate case. Within the twelve large-tier size/family/noise
conditions, disagreement has higher rank correlation with drift than score
$\widehat R$ in 11 conditions for local MH, 10 for PT, and 9 for informed MH.
Median within-condition correlations for disagreement versus score are
0.392 versus 0.035, 0.424 versus 0.138, and 0.274 versus -0.103, respectively.
These descriptive comparisons do not constitute twelve independent hypothesis
tests with multiplicity-adjusted significance guarantees.

\subsection{Reference-budget and diagnostic follow-ups}

\begin{table}[tb]
\centering\small
\begin{tabular}{lrrrrr}
\toprule
Particles & Moves & Cross PT & Within SMC & Pass & Ancestors \\
\midrule
512 & 32 & 0.788 & 0.903 & 0 & 4.556 \\
512 & 256 & 0.721 & 0.735 & 0 & 11.025 \\
1024 & 32 & 0.790 & 0.891 & 0 & 6.575 \\
2048 & 32 & 0.792 & 0.886 & 0 & 10.244 \\
4096 & 4 & 0.850 & 0.936 & 0 & 6.456 \\
4096 & 32 & 0.784 & 0.874 & 0 & 19.306 \\
\bottomrule
\end{tabular}
\caption{SMC on forty failure-selected graphs. Distances are mean row-TV; passes require all original screen conditions. Ancestors are surviving initial particle lineages, averaged over graphs and four replicates. The 4096/4 and 512/256 settings are work controls.}
\label{tab:smc-followup}
\end{table}

\begin{table}[tb]
\centering\small
\begin{tabular}{lrrrrr}
\toprule
Replicas & Cross original PT & Within PT & Trips & Min. swap & Cold swap \\
\midrule
8 & 0.508 & 0.761 & 5.013 & 0.004 & 0.200 \\
16 & 0.710 & 0.761 & 8.569 & 0.082 & 0.386 \\
32 & 0.704 & 0.740 & 9.150 & 0.322 & 0.618 \\
\bottomrule
\end{tabular}
\caption{Longer PT references on the same forty graphs. Trips are mean cold--hot--cold traversals per replicate; minimum swap is the graphwise minimum over edges and replicates, averaged over graphs. Replica count trades ladder resolution against sweeps at fixed local-proposal work.}
\label{tab:pt-followup}
\end{table}

\begin{table}[tb]
\centering\small
\begin{tabular}{lrrrr}
\toprule
Transitions & Adjacent drift & $D_{\mathrm{marg}}$ & Window drift & Acceptance \\
\midrule
10,000 & 0.701 & 0.936 & 0.309 & 0.125 \\
40,000 & 0.355 & 0.881 & 0.277 & 0.103 \\
160,000 & 0.308 & 0.821 & 0.222 & 0.093 \\
\bottomrule
\end{tabular}
\caption{Unmatched informed runs on forty selected graphs. The first adjacent comparison uses the original early budget; subsequent rows compare successive new budgets. Drift and disagreement are different quantities and neither alone certifies accuracy.}
\label{tab:informed-followup}
\end{table}

\begin{table}[tb]
\centering\small
\begin{tabular}{lrrrr}
\toprule
Diagnostic & Exact informed AUC & Large local MH $\rho$ & Large PT $\rho$ & Large informed MH $\rho$ \\
\midrule
$D_{\mathrm{marg}}$ & 0.975 & 0.433 & 0.809 & 0.665 \\
Score $\widehat R$ & 0.846 & 0.184 & 0.437 & 0.210 \\
Indicator $\widehat R$ & 0.971 & 0.302 & 0.328 & 0.190 \\
Wei\ss/DAR(1) & 0.907 & -0.355 & 0.254 & -0.745 \\
Score $R^*$ & 0.810 & 0.251 & 0.692 & 0.342 \\
Panel $R^*$ & 0.973 & 0.747 & 0.566 & -0.043 \\
Full $R^*$ & 0.973 & 0.720 & 0.599 & 0.061 \\
\bottomrule
\end{tabular}
\caption{Added diagnostic comparison. Exact AUC uses error above 0.1; large Spearman correlations use continuous later drift. These columns describe different endpoints and sampler groups. Undefined values are excluded and their counts are retained in the companion repository. Paired bootstrap contrasts use shared valid graphs.}
\label{tab:comparators-followup}
\end{table}

\begin{figure}[tb]
\centering\includegraphics[width=\linewidth]{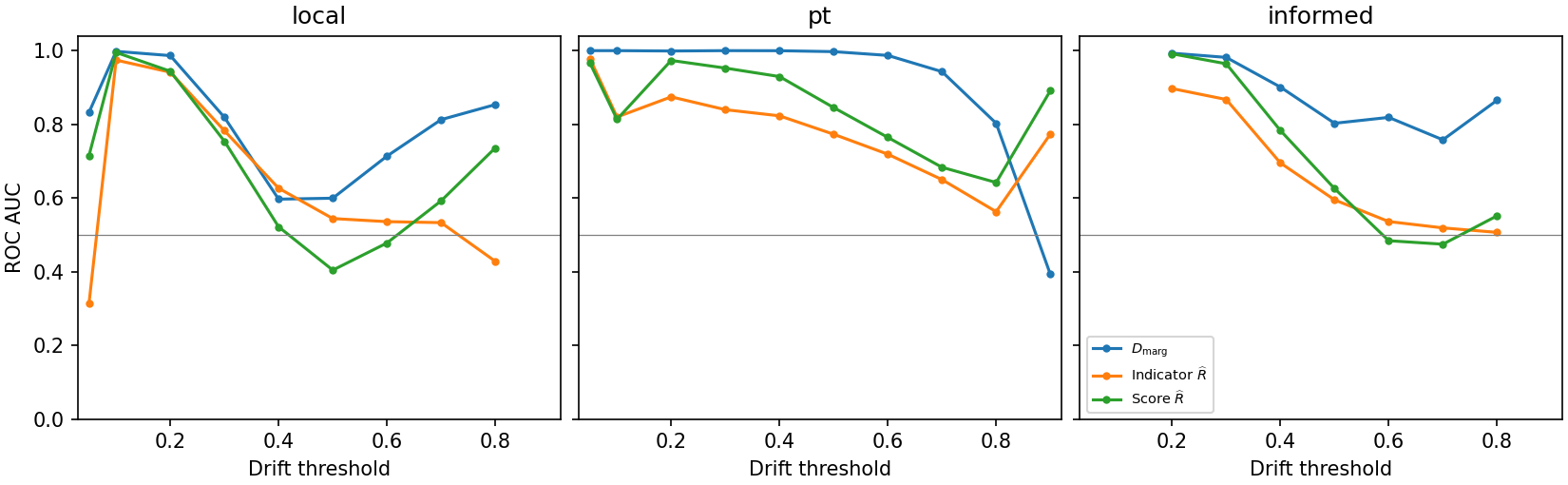}
\caption{Prespecified large-tier drift-threshold sweep for the original diagnostic panels. Missing points have undefined AUC. All positive/negative counts and stratified bootstrap intervals, including size-specific results, are available in the companion repository. Spearman correlation with continuous drift does not change with threshold.}
\label{fig:threshold-followup}
\end{figure}

\paragraph{What the reference ablation identifies.}
Increasing particles alone from 512 to 4096 leaves mean discrepancy from the
original PT reference nearly unchanged ($0.788$ versus $0.784$; paired change
$-0.004$, stratified bootstrap 95\% interval $[-0.014,0.007]$), and no setting
passes the screen on any of the forty selected cases. At the same eightfold
work as 4096 particles with 32 moves, using 512 particles with 256 moves reduces
mean cross-reference discrepancy to $0.721$ (change $-0.067$, interval
$[-0.086,-0.041]$) and within-SMC disagreement from
$0.903$ to $0.735$. Allocating baseline work to more particles and only four
moves instead raises cross-reference discrepancy to $0.850$. Thus particle
count alone is insufficient in this design, while rejuvenation allocation
materially affects agreement. This is evidence of a computational allocation
limitation, not proof that the more concordant reference is accurate.
Only about 19 initial lineages survive on average even with 4096 particles
and 32 moves; genealogy remains concentrated despite the larger population.

Increasing PT ladder resolution improves mean minimum-edge swap acceptance
from $0.004$ at eight replicas to $0.322$ at thirty-two. Mean round trips increase
from about $5.0$ to $9.2$ per replicate, but within-PT disagreement remains about
$0.74$--$0.76$. Better exchange telemetry therefore does not resolve reference
uncertainty at the tested work. SMC and PT both retain substantial independent
run disagreement, so movement toward the original PT answer cannot by itself
be interpreted as movement toward the posterior.
Comparing each new PT ladder against each particle level also yields no screen
passes. At 4096 particles, mean cross-method discrepancy remains $0.758$, $0.752$,
and $0.737$ for 8, 16, and 32 replicas, respectively.

The absolute screen also has a resolution limitation. For two independent
$N$-sample estimates from a uniform-permutation target, let
$F_{N,1/n}$ denote the binomial CDF. Their exact expected row-TV is
\[
 \frac{n}{N}\sum_{k=0}^{N-1}F_{N,1/n}(k)\{1-F_{N,1/n}(k)\}.
\]
At $n=100$, this is $0.245$ for $N=512$ and $0.088$ for $N=4096$;
even the latter exceeds the $0.05$ screen. These are illustrative expectations,
not estimates of the unknown graph-specific posterior floor. They explain
why failure of a fixed absolute gate need not identify nonconvergence. The
much larger observed discrepancies, work sensitivity, and genealogy and
ladder telemetry provide additional algorithmic evidence beyond gate counts.

\paragraph{A constructive check of reference inadequacy.}
As a post hoc analysis of the saved scores, a feasible alignment supplies a
useful bound without an approximate reference. Under the uniform permutation
prior, let $s_w$ be its score and let $s$ be any threshold. Then
\[
 \Pr_{\pi\mid A,B}\{S(\pi)\le s\}
 \le \min\{1,n!\exp[\beta(s-s_w)]\}.
\]
Indeed, the sublevel numerator is at most $n!\exp(\beta s)$ and the
normalizer is at least $\exp(\beta s_w)$. The witness need not be MAP.
This bounds the entire score sublevel, not just the finite observed support.
We use the saved planted alignment as a feasible witness and evaluate the
bound in log space at the largest final SMC particle score across all four
replicates. For all ten selected $n=100,q=0.1$ graphs, under every one of the
six SMC settings, the sublevel containing every final particle has posterior
mass below $10^{-532}$. Thus in this concentrated stratum, the final SMC
scores are demonstrably unrepresentative of the target; disagreement is not
merely a consequence of the marginal estimator's finite resolution. The bound
is uninformative in the $q=0.3$ stratum and does not validate PT or quantify
node-marginal error. It also does not identify a unique landscape mechanism.

\paragraph{The informed result survives budget escalation as a limitation.}
Increasing the informed budget reduces mean adjacent-budget drift from $0.701$
at the first escalation to $0.308$ at 160,000 transitions, while mean cross-chain
disagreement falls from $0.936$ to $0.821$. At the final budget, mean within-window
drift is still $0.222$, and none of the forty graphs has both adjacent drift and
disagreement at most $0.1$. This is progress in stability, not evidence that the
chains have mixed. The original 329-transition row at $n=100$ was starved, but
that matching constraint is not the sole explanation for the remaining failure
at the tested unmatched budgets.

The same post hoc witness bound applied to saved informed score traces further
qualifies this result. For every selected $n=100,q=0.1$ graph at 160,000
transitions, the score sublevel containing all retained draws has posterior
mass below $10^{-521}$. Longer runs of this kernel therefore remain severely
unrepresentative in that stratum. This conclusion concerns the implemented
kernel, initialization, and finite budgets, not all informed samplers or an
asymptotic impossibility of mixing.

\paragraph{Threshold robustness is conditional, not uniform.}
At cutoff $0.5$, the informed large tier has 153 positives and 87 negatives,
with disagreement AUC $0.803$, replacing the undefined AUC at $0.1$ with an
estimable comparison. Across informative cutoffs $0.2$--$0.8$, its disagreement
AUC ranges from $0.758$ to $0.993$. PT remains highly discriminative through
moderate cutoffs, but its AUC at $0.9$ is only $0.393$ and has one positive case.
Local MH disagreement AUC also falls to about $0.60$ at cutoffs $0.4$ and $0.5$.
Thus the audit is sensitive to what magnitude of instability is classified
as a failure. The full grid and continuous correlations support a conditional
interpretation; no uniformly favorable threshold is claimed.

\paragraph{Assignment information helps; no single diagnostic dominates.}
On exact informed cases, full-assignment $R^*$ has AUC $0.973$, essentially
matching disagreement at $0.975$. Their paired AUC difference is $0.001$ with
95\% interval $[-0.016,0.019]$. The node-panel and full versions coincide at
$n=8$ because both include every node. Score-only $R^*$ has AUC $0.810$, while
the categorical statistic has AUC $0.907$; disagreement's AUC advantage over
the latter has an interval spanning zero ($[-0.007,0.195]$). The original indicator panel with at most sixteen assignments is also competitive at $0.971$. Thus the exact result
supports monitoring assignment information, not a unique advantage of full
marginal disagreement over other assignment-aware summaries.

At larger sizes, the ranking depends on sampler and aggregation. For local MH,
node-panel and full $R^*$ have pooled drift correlations $0.747$ and $0.720$,
above disagreement's $0.433$. The paired difference between disagreement and
the panel is $-0.314$ with interval $[-0.362,-0.267]$. Within the twelve
size/family/noise conditions, however, median correlations are $0.392$ for
disagreement, $0.171$ for panel $R^*$, and $0.094$ for full $R^*$. Both summaries
matter; the pooled advantage does not translate into a within-condition
advantage. For PT and informed MH, disagreement's pooled correlations are
$0.809$ and $0.665$, compared with $0.599$ and $0.061$ for full $R^*$.
Score-only classification partially closes the PT gap (correlation $0.692$),
but does not eliminate it.

Poor ranking of drift magnitude should not be confused with failure to flag
nonconvergence. Full $R^*$ is at least $7.9$ out of a maximum of eight in
131 of 240 large informed cases: chains are readily identifiable, but the
classifier has little remaining range for ordering severity. The categorical
statistic's pooled informed correlation is negative ($-0.745$), whereas its
median within-condition correlation is $0.084$. Its DAR(1) adjustment, sparse
categories, unequal transition budgets across sizes, and this particular
maximum-over-nodes aggregation limit interpretation. These comparisons do
not establish that categorical diagnostics or alternative classifiers are
generally ineffective. All native categorical $p$-values and diagnostic
coverage are retained alongside the ranking summaries.

\subsection{Draw separation and the broader proposal check}

\begin{table}[tb]
\centering\small
\begin{tabular}{lrrrr}
\toprule
Sampler & Original & Shared A & A to B & Chain holdout \\
\midrule
Informed MH & 0.665 & 0.750 & 0.633 & 0.486 \\
Local MH & 0.433 & 0.482 & 0.420 & 0.295 \\
PT & 0.809 & 0.805 & 0.747 & 0.640 \\
\bottomrule
\end{tabular}
\caption{Spearman associations with later drift on all 240 large graphs. Original uses the full early sample in both quantities. Shared A uses half as many early draws in both quantities. A to B diagnoses A but measures drift from disjoint B draws. Chain holdout uses two chains for the diagnostic and two different chains for the outcome. These columns differ in retained information; shared controls and reverse orientations are retained in the companion repository.}
\label{tab:review-split}
\end{table}

For informed, the nonoverlapping-window correlation is 0.633, compared with 0.750 for the same-A control; their paired change is -0.117 with pointwise 95\% interval $[-0.135,-0.099]$. The reverse-window correlation is 0.740. Median within-condition correlations are 0.292 for A-to-B and 0.095 for the chain-held-out check. The reverse chain split gives 0.455.

For local, the nonoverlapping-window correlation is 0.420, compared with 0.482 for the same-A control; their paired change is -0.062 with pointwise 95\% interval $[-0.087,-0.036]$. The reverse-window correlation is 0.479. Median within-condition correlations are 0.435 for A-to-B and 0.155 for the chain-held-out check. The reverse chain split gives 0.332.

For PT, the nonoverlapping-window correlation is 0.747, compared with 0.805 for the same-A control; their paired change is -0.058 with pointwise 95\% interval $[-0.098,-0.020]$. The reverse-window correlation is 0.766. Median within-condition correlations are 0.114 for A-to-B and 0.158 for the chain-held-out check. The reverse chain split gives 0.708.

These controls assess sensitivity to draw reuse and serial dependence; they do not turn drift into posterior error or eliminate graph-level heterogeneity. A change from the original four-chain full-window number also reflects reduced sample size or fewer chains. All new intervals remain exploratory and pointwise.

\begin{table}[tb]
\centering\small
\begin{tabular}{lrrr}
\toprule
Diagnostic & Local MH & PT & Informed MH \\
\midrule
$D_{\mathrm{marg}}$ & 0.420 & 0.747 & 0.633 \\
Score $\widehat R$ & 0.160 & 0.328 & 0.157 \\
Indicator $\widehat R$ & 0.263 & 0.262 & 0.169 \\
Wei\ss/DAR(1) & -0.337 & 0.073 & -0.667 \\
Score $R^*$ & 0.236 & 0.604 & 0.182 \\
Panel $R^*$ & 0.722 & 0.583 & -0.025 \\
Full $R^*$ & 0.666 & 0.694 & 0.078 \\
\bottomrule
\end{tabular}
\caption{All diagnostics recomputed from A and evaluated against B-to-later drift. Each column is a pooled Spearman correlation, with undefined diagnostic values excluded and their counts retained in the companion repository. RF hyperparameters are unchanged; less data can change classifier performance.}
\label{tab:review-comparators}
\end{table}

\paragraph{The categorical reversal is not specific to $n=100$.}
Stratifying the original informed results by size alone, pooling families and
noise levels, gives correlations $0.783$, $-0.754$, and $-0.021$ at $n=20,50,100$,
respectively. Thus the strongly negative pooled correlation cannot be described
as solely an $n=100$ artifact or eliminated simply by conditioning on size.
At $n=50$, three of four family/noise-specific correlations are positive
($0.150$, $0.460$, and $0.018$), while SBM at $q=0.1$ is negative ($-0.426$).
Family/noise aggregation therefore also affects the result. Conversely,
holding each family/noise pair fixed and pooling sizes leaves all four
correlations negative, ranging from $-0.480$ to $-0.879$. These descriptive
partitions show multiple aggregation effects; they do not isolate sparsity,
the dependence correction, or maximum-statistic aggregation as a causal
mechanism. Those explanations remain hypotheses.

\begin{table}[tb]
\centering\small
\begin{tabular}{lrrrr}
\toprule
Variant & Transitions & Work (millions) & $D_{\mathrm{marg}}$ & Window drift \\
\midrule
Broader & 160,000 & 7.680 & 0.983 & 0.258 \\
Broader & 663,333 & 31.840 & 0.979 & 0.306 \\
Single pivot & 160,000 & 31.840 & 0.976 & 0.259 \\
\bottomrule
\end{tabular}
\caption{Matched eight-graph broader-proposal probe at 100 vertices, with four chains per graph. Work counts transposition-delta evaluations per chain. The 160,000-step broader run matches transitions; 663,333 steps match the original 160,000-step single-pivot work within 16 delta evaluations. Comparisons are exploratory, not evidence about every locally balanced proposal.}
\label{tab:review-kernel}
\end{table}

At the matched-work broader budget, none of the eight graphs has both cross-chain disagreement and within-window drift at most $0.1$. The feasible-alignment bound places the score sublevel containing all retained draws below posterior mass $0.01$ on four of the eight graphs. These are complementary checks, not a convergence certificate. The alternative changes candidate construction and move size together; its outcome cannot isolate a single mechanism or establish intrinsic hardness of the target for all informed kernels.

\begin{figure}[tb]
\centering\includegraphics[width=\linewidth]{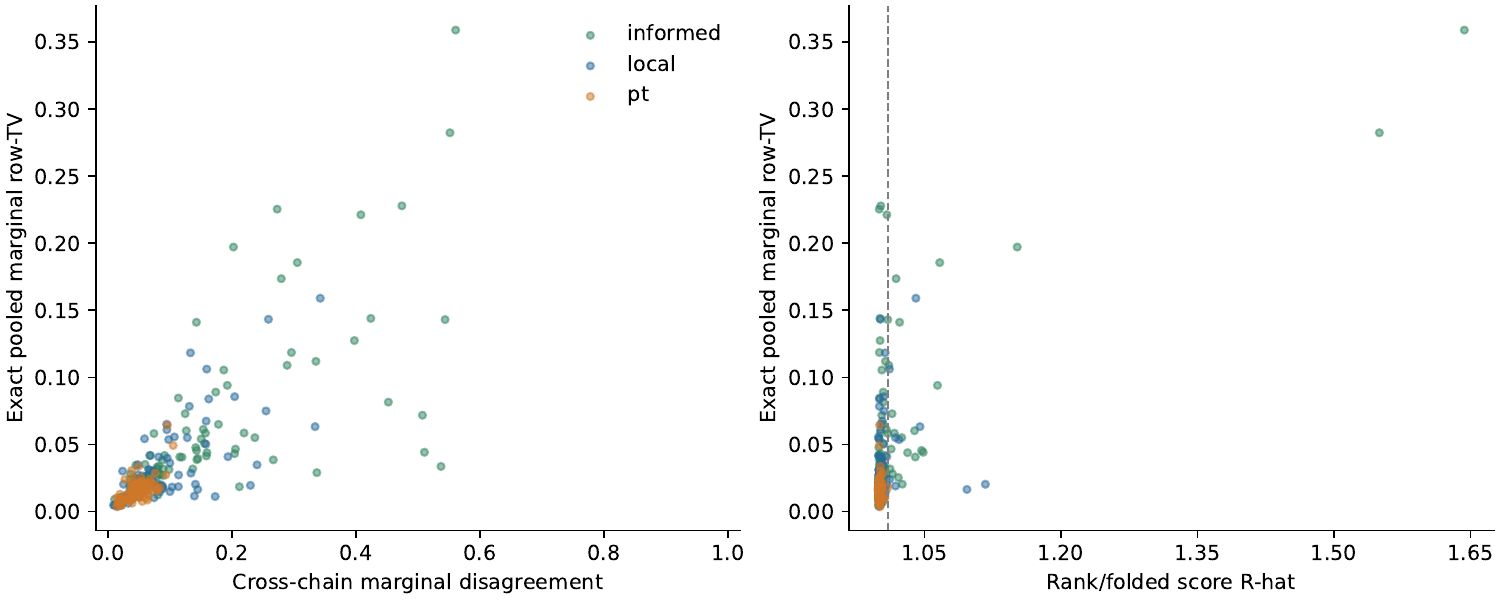}
\caption{New exact audit. Each point is one graph--method combination.
Disagreement and score diagnostics are compared with the pooled posterior
marginal error. Numerical comparisons and sample counts are reported separately
by method; this visualization does not treat the plotted methods as independent
replications of the same graph.}
\end{figure}
\begin{figure}[tb]
\centering\includegraphics[width=\linewidth]{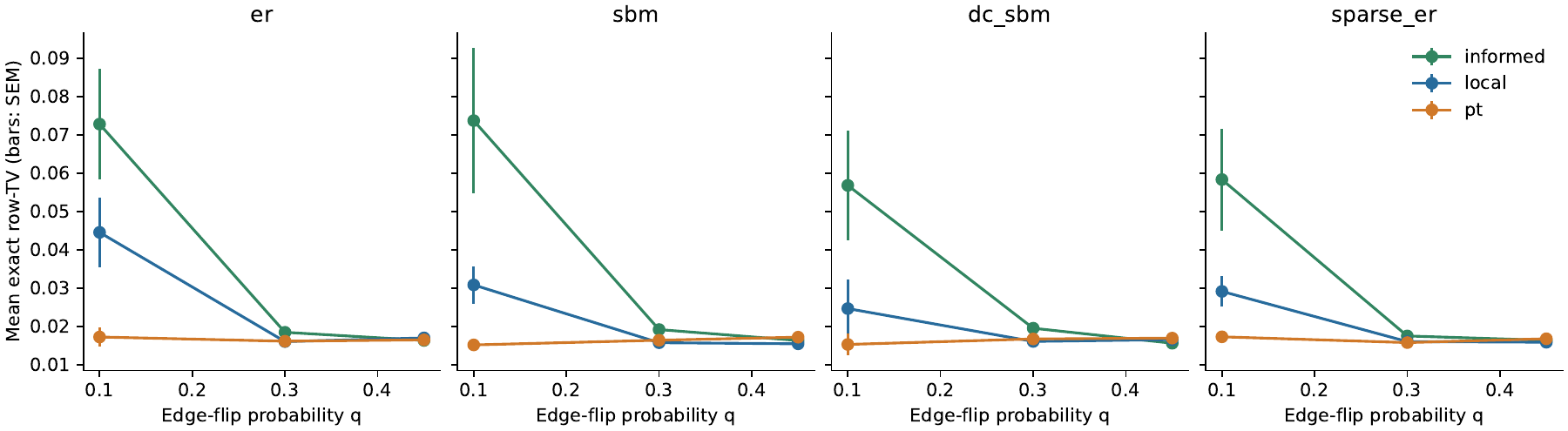}
\caption{Marginal error across graph families and edge-flip regimes. Bars show
standard errors across twenty graphs per condition. Increasing noise changes
the likelihood posterior as well as the input data.}
\end{figure}

\paragraph{Why marginal metrics can look similar.}
Frobenius and Jensen--Shannon comparisons, and selected assignment indicators,
also retain information absent from the score. Strong performance by these
baselines supports the assignment-sensitive interpretation; it would not
establish a unique advantage of the $L^1$ choice. In diffuse targets, empirical
marginal error and disagreement remain positive at finite retained sample
count, even under ideal sampling. Small differences near the IID floor should
not be interpreted as metastable bias.

\begin{figure}[tb]
\centering\includegraphics[width=\linewidth]{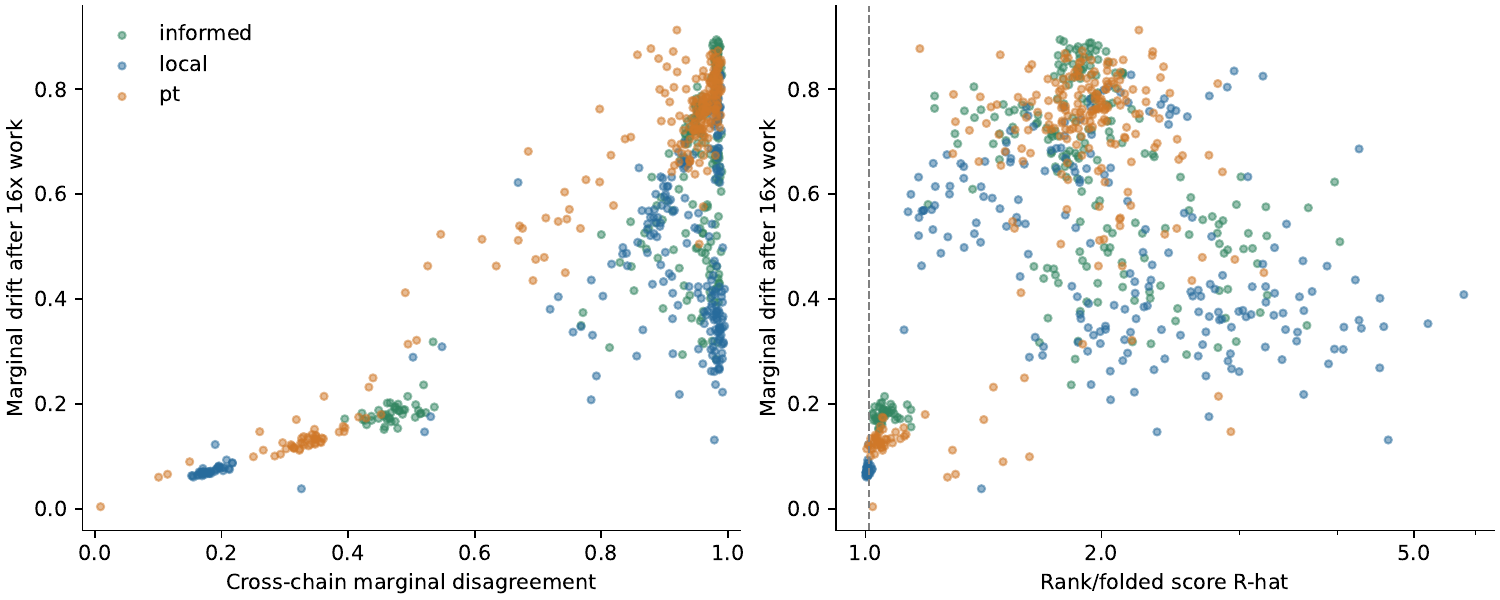}
\caption{Large-graph audit with later marginal drift as the endpoint. Drift is
observable without an exact oracle but is not posterior error. Size and noise
can affect both axes; per-condition associations accompany pooled results.}
\end{figure}

\section{Failure modes and downstream implications}
An exact false-negative example requires no approximate reference. Set $A=B$
to a cycle. The rotation group acts transitively, so every correspondence marginal
is exactly $1/n$ for every temperature. If all reported samples equal the
identity, $\Dm=0$ although pooled row-TV is $1-1/n$. Our actual local MH
stress experiment uses a thirty-cycle, four seeds, $2^{18}$ proposals,
and $\beta\in\{2,4,6,9\}$. It compares a common identity start with four
separated rotation starts. The complete temperature grid is reported; no
single temperature was selected as the only outcome.

At $\beta=6$, common-start row-TV is approximately $0.96665$ while marginal
disagreement is approximately $0.000033$. The score diagnostic is approximately
one. Separated rotation starts instead yield disagreement approximately one,
while their pooled error remains large. At $\beta=9$ the scores are exactly
constant and the modern score diagnostic is undefined. These examples show
why a reporting rule must retain degenerate cases and why adding draws within
the same subset of alignments does not by itself validate posterior uncertainty.

When an exact symmetry is known, averaging a marginal estimate over its
symmetry group cannot increase its $L^1$ error: the averaging operator is a
contraction and leaves the true marginal matrix fixed. This does not apply to
approximate symmetries and does not establish that a chain has explored
inequivalent alignments.

\paragraph{Downstream scope.}
For bounded assignment costs $0\le c_{ia}\le1$, define
$F(P)=n^{-1}\sum_{ia}c_{ia}P_{ia}$. Rowwise TV duality implies
$|F(P)-F(Q)|\le\TV(P,Q)$. Marginal accuracy therefore controls expected
average bounded per-vertex assignment costs. Contact persistence and other
pairwise network observables generally require joint assignment information.
For example, the posterior probability that two mapped vertices form an edge
depends on $\Pr(\pi(i)=a,\pi(j)=b)$, not only on $P_{ia}$ and $P_{jb}$.
The distinction is important when alignment uncertainty is propagated to
network-level scientific quantities.

\section{Applications and future directions}
The experiments above use a deliberately simple posterior, but the auditing
problem is broader: many scientific matching pipelines ultimately consume a
correspondence matrix, a soft alignment, or a small set of plausible mappings.
In such settings, checking only an optimized score or a single representative
matching can miss uncertainty that matters downstream. The most direct future
extensions are therefore domain-specific versions of the same audit: define the
scientific observation model, identify the correspondence summaries that enter
downstream decisions, and validate those summaries rather than only the scalar
objective used to construct them.

\paragraph{Biological networks and connectomics.}
Protein-interaction alignment uses network topology and biological information
to identify functional correspondences~\cite{singh,berg2006,kolar2012}.
Connectome analysis provides an even more literal graph-matching example:
graph matching has been used to pair bilaterally homologous neurons across
hemispheres in nanoscale connectomes~\cite{pedigo2023}. In Bayesian versions of
these tasks, correspondence marginals could support selective annotation
transfer, ambiguity sets, or downstream analyses that retain several plausible
matches rather than committing to one permutation. The present results identify
a computational prerequisite for that workflow: stable objective values should
not be substituted for stable correspondence estimates. Real applications
would additionally require weighted or directed edges, node attributes,
missing interactions, partial matches, and possibly unequal graph sizes.

\paragraph{Spatial and multimodal tissue alignment.}
Alignment is also central to spatial omics. For example, PASTE aligns spatial
transcriptomics slices using an optimal-transport formulation that combines
transcriptional similarity and physical structure, producing soft pairwise
alignments rather than a single hard correspondence~\cite{zeira2022}. These
couplings are not the permutation posterior studied here, and tissue slices can
require many-to-many or unequal-mass matching. Nevertheless, the same audit
question arises whenever soft correspondences are interpreted as uncertainty:
do the reported alignment weights remain stable under independent computation,
and are they accurate on instances where a trusted reference can be constructed?
A natural extension is therefore to adapt assignment-sensitive diagnostics to
partial, attributed, and transport-based alignments and to calibrate them on
synthetic or experimentally registered tissues.

\paragraph{Ecological and spatial population networks.}
Network alignment has been used to compare food webs across ecosystems and to
identify a conserved backbone of species playing similar interaction roles
across communities~\cite{bramon2018}. In that setting, uncertainty over
species correspondences could be propagated to uncertainty over the inferred
backbone rather than reporting only one optimal alignment. Related questions
arise in spatial population models. Gorgi et al.~\cite{moran} derive continuum
descriptions of spatial Moran birth--death and death--birth dynamics, including
heterogeneous environments represented by weighted lattice graphs. Comparing
such environments under uncertain site correspondence would require a weighted,
spatially informed observation model and, for dynamical quantities such as
fixation probabilities or wave speeds, posterior information beyond nodewise
marginals. This makes ecology and spatial population dynamics a useful setting
for extending the audit from correspondence uncertainty to uncertainty in
network-level observables.

\paragraph{Molecular correspondence and chemical reactions.}
Atom mapping in chemical reactions is another direct correspondence problem:
atoms in reactants must be matched to atoms in products, and modern methods can
formulate this as graph matching~\cite{astero2024}. Molecular symmetry is
particularly relevant to the present stress tests because topologically
equivalent atoms can support multiple plausible mappings even when a single
mapping has an excellent structural score. A probabilistic treatment could
therefore benefit from diagnostics that distinguish concentration on one
symmetry-related basin from genuine representation of mapping uncertainty.
The present permutation model is too simple for chemistry---atom and bond types,
reaction constraints, and partial structural changes must enter the target---but
the symmetry-induced diagnostic failure has a close analogue.

\paragraph{Statistical-mechanical and computational perspective.}
The target can be written as a Gibbs distribution with energy $E(\pi)=-S(\pi)$.
Its inverse temperature is fixed by the observation model, rather than a
physical thermometer. Nevertheless, the computational issue has a familiar
analogue in molecular simulation: a low-dimensional observable can appear
stable while the sampled ensemble remains restricted to a subset of states.
Sampling-quality assessments therefore depend on the observables of interest
and on exploration across states~\cite{grossfield}. Replica exchange is used
in physical simulation for the same general exploration problem~\cite{earl}.
The same principle extends to other discrete correspondence problems, including
computer vision and multi-network alignment~\cite{zass2008,lazaro2025}:
convergence should be assessed on the posterior functionals used by the
application, not only on a convenient scalar objective.

\paragraph{Methodological directions.}
Several extensions follow directly from the limitations exposed here. First,
the audit can be generalized from bijections to partial, unequal-size,
weighted, directed, attributed, temporal, and multi-graph alignments. Second,
applications whose downstream quantities depend on edges or motifs require
pairwise or higher-order assignment marginals, motivating diagnostics beyond
nodewise correspondence matrices. Third, exact symmetries suggest quotienting
or symmetry-aware summaries so that exploration among equivalent states is
separated from exploration among scientifically distinct alignments. Fourth,
learned or amortized proposal mechanisms could be combined with the same
small-instance exact calibration used here, so improved speed is not accepted
without verifying uncertainty quality. Finally, domain-specific studies should
propagate alignment uncertainty into the final scientific quantity---for
example functional annotation, neuron homology, tissue registration, ecological
backbones, or reaction centers---and evaluate whether conclusions change when
reference uncertainty is acknowledged.

\section{Conclusion}
Assignment-sensitive diagnostics reveal computational failures that score-only
monitoring can miss in Bayesian graph alignment. The significance is broader
than one matching model: in high-dimensional discrete Bayesian inference,
convergence of a low-dimensional energy or score projection need not imply
convergence of the posterior functionals that actually drive scientific or
algorithmic decisions. The exact benchmark shows a clear benefit over score
$\widehat R$ for the tested informed sampler, while indicator and classifier
diagnostics are competitive and the local-sampler comparison remains uncertain.
At larger sizes, positive associations with later drift persist under
disjoint-window and held-out-chain controls, with attenuation and uneven
within-condition performance. These results support monitoring correspondence
quantities directly, without treating agreement as a convergence certificate.

Reference validation is an equally important part of the audit. Increasing
SMC particle count alone does not resolve the selected disagreements;
reallocating work to rejuvenation improves agreement without establishing
correctness. Feasible-alignment score bounds independently expose severe
sampling failure in concentrated cases. Longer informed runs and one broader
proposal remain unstable, but do not characterize every informed kernel.
Matching score-delta evaluations also does not equalize wall time or all
algorithmic overhead.

The study uses synthetic graphs and finite computational budgets. Exact
calibration is restricted to small instances; larger-graph drift is not
posterior error, and no accurate large-tier reference is established.
Failure-selected follow-ups, shared graph instances, and pointwise intervals
limit generalization and do not support program-wide confirmatory claims.
Within this scope, three features make the study a reusable audit template:
exact small-instance calibration where posterior truth is available, explicit
scrutiny of approximate references rather than assuming that long runs are
correct, and constructive falsification checks when no trustworthy reference
exists. The practical message is to specify the target, validate its
implementation on exact cases, monitor assignment-sensitive summaries alongside
scalar traces, and audit approximate references before using them as ground
truth. Extending this workflow to connectomics, spatial omics, ecological and
molecular networks, and other correspondence problems requires domain-specific
likelihoods and validation of the downstream observables that the alignment is
meant to support.

\end{document}